\newcommand{\ie}{\emph{i.e.}}
\newcommand{\eg}{\emph{e.g.}}
\newenvironment{paragraphlist}
	{\begin{list}{}{}}
	{\end{list}}
\begin{document}
%
%\title{Potentials and Pitfalls of Using Quantum Annealing as a Software Component}
\title[Quantum-Annealing Based Software Components: A Case Study]{Quantum Annealing-Based Software Components:\\ An Experimental Case Study with SAT Solving
}

\author{Tom Krüger}
\affiliation{Ulm University\\
Germany}
\email{tom.krueger@uni-ulm.de}

\author{Wolfgang Mauerer}
\affiliation{Technical University of Applied Sciences Regensburg\\
Siemens AG, Corporate Research, Munich\\
wolfgang.mauerer@othr.de}

%\acmConference{Q-SE@ICSE}{May 23-29, 2020}{Seoul, South Korea}

%
\begin{abstract}
Quantum computers have the potential of solving problems more efficiently than 
classical computers. While first commercial prototypes have become available,
the performance of such machines in practical application is still subject to
exploration. Quantum computers will not entirely replace classical machines,
but serve as accelerators for specific problems. This necessitates integrating quantum
computational primitives into existing applications.

In this paper, we perform a case study on how to augment existing
software with quantum computational primitives for the Boolean satisfiability 
problem (SAT) implemented using a quantum annealer (QA).  We discuss relevant quality
measures for quantum components, and show that mathematically
equivalent, but structurally different ways of transforming SAT to a QA can lead to
substantial differences regarding these qualities. We argue that engineers need to be aware
that (and which) details, although they may be less relevant in traditional software engineering, 
require considerable attention in quantum computing.
\end{abstract}
\keywords{Quantum Computing,
	Quantum Annealing, 
	Boolean Satisfiability,
	Experimental Performance Analysis}

\maketitle

%\COMMENT{Thoughts behind title: framing paper as "how accessible is QA to the common software developer, what's the processes of %using QA, and what are the tricky parts, where one needs to pay close attention"}

\section{Introduction}\label{sec:introduction}
%\TODO{Line of argumentaton: SAT is paramount computational tool in descriptive 
%problem solving -- quantum should cause minimal impact in re-engineering existing solutions -- SAT is primitive with many users %(quantify?)---- straightforward exchange classical2quantum --  considerable research on classical implementations -- embedding,
%performance characteristics etc. need to be well understood on quantum annealers}
%
The upcoming end of Moore's law and the trend towards energy efficient systems,
but the likewise ever-growing need for more computational power pose substantial challenges to
systems engineering and software architecture. New computational approaches that
substantially diverge from technologies established during the last decades start to graduate from
research laboratories into first working prototypes. Especially quantum computing has gained
substantial attraction during the last years~\citep{Peper2017TheEO}. Programming quantum
computers (QC) differs drastically from previously established techniques and approaches. 
Integrating QC into existing appliances must not only be addressed at the level of
algorithmic implementation, but also concerns many of the broader issues investigated 
in software engineering~\cite{Bass_2012}.  In this paper, we argue that
the problem at the current stage of development must be considered at a much lower level of
abstraction than is customary in software engineering, and illustrate this by a case
study of how to transition a core computational primitive---solving binary satisfiability
problems---from classical to quantum in existing software architectures. Our study 
illustrates that defining and testing specific quality properties of QC 
components is one of the crucial challenges. These properties do not play a central role in
traditional engineering, but must be considered in software architectures with quantum
components. We illustrate this by analysing
different approaches---one of which has been specifically designed for this paper, and improves
considerably on the state-of-the-art---to solving the binary satisfiability (SAT) problem.
We hope this helps readers to form a realistic intuition of near- and mid-term
capabilities, potentials and challenges of augmenting 
software with quantum components.

%\TODO{Highlight contributions}
%We  illustrate challenges that arise when quantum computational primitives are integrated into
%existing software architectures.

% TODO:
% Naturally, computational primitives that are readily available and most mature will be integrated
% into software architectures first. Boolean satisfiability is not only one of the best studied problems
% in traditional computer science, theorie+impl. NP-complete problem, quantum reduction => most interesting
% candidate. Implications for software?
% Establish a realistic notion of the current state of quantum computing compared to traditional
% state-of-the-art.

% TODO: Explore component analysis by investigating a well-understood example case (SAT) with
% a-priori known results to understand to difficulties and challenges of integrating, testing
% and analysing unknown components.

\section{Quantum Annealing}
By utilising quantum mechanical properties, QCs are expected to solve some problems 
more efficiently than their classical counterparts. Simulations of quantum systems~\cite{ortiz2001quantum} and
chemical reactions~\cite{reiher2017elucidating}, breaking of cryptographic codes~\cite{Shor1994PolynomialTimeAF}, but also 
optimising portfolios~\cite{venturelli2019reverse} are among the list of candidate problems.
Recent advances---although not undisputed---claim \emph{quantum supremacy}~\cite{arute2019quantum}, even if 
for extremely artificial problems.
Real-world adoption of quantum computing, as it matters to software engineering, is likely to happen
in an evolutionary way than by disruptive revolution.

We base our considerations on quantum \emph{annealers}: Many early potential industrial
use-cases~\cite{stollenwerk2019flight, neukart2017traffic, stollenwerk2019quantum} rest on this
class of machines, in part because they  were among the first offerings available for commercial
use (discussions about the full quantum mechanical nature of such machines~\cite{shin2014quantum}
are not relevant for our purposes).

Especially NP-complete problems, which are known to be classically intractable for inputs of growing
size when non-approximate solutions are desired, are candidates for which (polynomial) quantum speed-ups would be desirable.
Many NP-complete problems of practical interest are known. Especially the Boolean satisfiability problem (SAT)
has received substantial attention because many use cases, from system verification to constrained planning problems~\cite{karp1972reducibility}, have SAT at their core.
Quantum annealers are particularly well suited to process problems of this type~\cite{lucas2014ising}. They differ considerably from gate-based approaches in their physical realisation, and in the ways programs are engineered.

\subsection{Using Quantum Annealing Primitives in Existing Software Architectures}
Software engineering is (ignoring many aspects that
we cannot address for the lack of space) concerned with development, integration,
and testing (verification, validation, performance, quality, \dots) of
software~\cite{Bass_2012}.  This impacts quantum software development:

\subsubsection{Development}
A considerable body of previous research devoted to developing languages for programming
quantum systems focuses on gate-based approaches (\eg,~\cite{svore2006layered, haner2018software, steiger2018projectq}).  A growing number of quantum programming languages has been devised for this
hardware class (\eg,~\cite{mauerer2005semantics, green2013quipper, svore2018q}).  Roughly speaking, gate based quantum computers relate to quantum annealers like imperative programming languages do to declarative approaches. A deep understanding of quantum mechanics is not required to use current QA hardware, which is beneficial from a software point of view. Engineers can resort to techniques known from constraint programming, optimisation, and problem reduction. 

\subsubsection{Integration}
As Knill~\cite{knill1996conventions} discussed as early as 1996, quantum computers will not
entirely replace traditional machines, but will be part of hybrid quantum-classical
architectures, not unlike GPUs~\citep{S_Humble_2014, abbott2019hybrid} or other accelerators
(TPUs, FPGAs, \dots). Quantum annealers can be seen as hardware accelerators
for approximating quadratic unconstrained binary optimisation (QUBO)
problems.

% Large-Scale Collaboration -> NO
% Interfaces -> partially
% Component orientation -> yes
% Testing, verification -> partly
% Performance -> yes but quite differently
% Important: Checking for incorrect solutions

The ability to easily replace functional components of a software architecture is a crucial
element of component-based software engineering~\cite{tahir2016framework,vale2016twenty}, and many
existing applications are designed along these ideas. In the following, we consider that a SAT solving
component is supposed to be replaced by a QA device in an existing software architecture. 

Replacing a library function call to a traditional solver by a network-based job submission
to a QA device is an easy programming tasks that we do not consider any further. However, two
data conversions are necessary, as Figure~\ref{fig:conversion} illustrates: The propositional calculus
formula for which a solution is sought must be mapped to a QUBO. Once the result of the optimisation
process is available, is must be translated back to the original SAT model. Both steps can be trivially
abstracted by an interface.

\begin{figure}[htb]
\includegraphics[width=\linewidth]{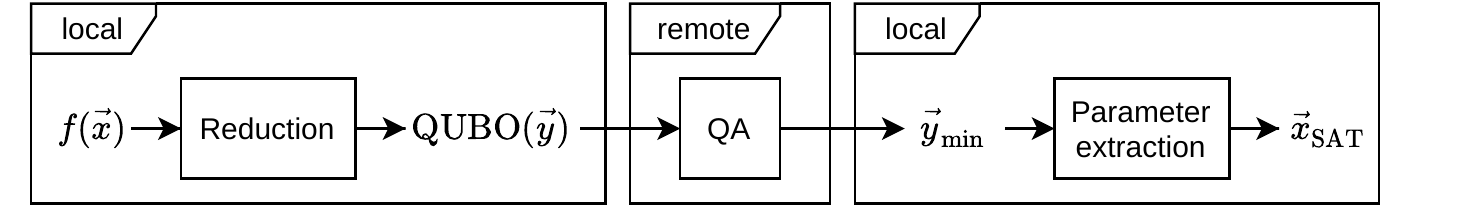}
\caption{Interface wrapping: Classical SAT solvers can be replaced with a QA based
implementation with limited effort.}\label{fig:conversion}
\end{figure}

\subsubsection{Testing}
Miranskyy and Zhang~\cite{Miranskyy_2018} discuss testing aspects related to verification and
validation of quantum programs. Fundamental properties of NP-complete problems guarantee that
solutions can be verified in polynomial time~\cite{Sipser:2006},  and consequently, validation and
verification of QA programs is not a core challenge.

However, quantum annealers usually only deliver \emph{approximate} solutions to problems,
and the quality of approximation is closely related to how ``programs'' (in the form
of mathematical reductions) are created. We focus on the issues arising from this
scenario in the rest of the paper.

\subsection{Workflow}\label{sec:workflow}
The workflow for solving problems on quantum annealers is more involved than for classical constraint optimisation. The necessary process comprises five stages, and choices in some of the stages
can greatly influence performance and accuracy of computations. Consequently, some knowledge of the inner
working of the quantum annealing process are useful. An AQO computation proceeds along the following stages~\citep{S_Humble_2014, McGeoch2014AdiabaticQC}:

\begin{paragraphlist}
\item[Problem Reduction] Like classical constraint optimization solvers, QA machines can optimise a specific class of models. Annealers can find solutions to \emph{quadratic unconstrained binary optimization} (QUBO)
  problems~\cite{Lewis2017QuadraticUB}, which are given by
  \begin{equation}
    \min[\vec{x}]\left(\sum_{i} c_{ii} x_i  + \sum_{i, j} c_{ij} x_i x_j\right)\label{eq:qubo}
  \end{equation}
  with \(x \in \{0,1\}\) and \(c \in \mathbb{R}\). A QUBO can  be represented by a weighted graph with nodes \(x_i\) and associated weights \(c_{ii}\). Weighted edges are given by \(c_{ij}\).

  Reducing a given problem \(p\) to a QUBO, \(p \leq \text{QUBO}\), requires no knowledge of quantum
  mechanics, and is similar to well-known reductions to Boolean satisfiability problems.
% However, QUBOs are rarely taught as part of standard computer science curricula.
  As we will discuss later, structurally different (but logically equivalent) reductions can lead to drastically different
  performance on contemporary hardware. %This is akin to micro-optimisations that
%  are irrelevant for almost all current software engineering practices, but were of utmost importance
%  for early generation traditional computers.
	
\item[Hardware Embedding] Software solvers can react dynamically to input, and easily build arbitrary data structures. For QA, the ``data structure'' used to represent a given input is fixed in hardware.
  This step ``translates'' an input onto the hardware structure~\cite{abbott2019hybrid, Choi2008} (see Figure~\ref{fig:embedding}). Mathematically equivalent reductions can lead
  to pronounced differences in solution quality, as we show in Section~\ref{sec:ksat_on_dwave}.
	
\item[Hardware Programming] The problem embedding needs to be transferred to the machine. The physical details
  of this operation are irrelevant to programmers, except that some parameters---most importantly, the
  duration of the annealing process---can be influenced. Finding an optimal value is currently a matter
  of experimentation.
\item[Execution] The machine finds a solution to the optimisation problem by ``executing'' a physical process.
\item[Post Processing]  Results obtained in the previous step are usually only \emph{close} to the desired optimum. Classical post-processing can improve solution quality~\citep{gabor2019assessing}. We will ignore
  this step in this paper since we are interested in the capabilities of QA as such,
  and not of classical data processing.
\end{paragraphlist}

\subsection{Experimentation Platform}\label{sec:dwave_platform}
All experiments that we discuss in the following were performed on a D-Wave 2000Q quantum annealer, model DW\_2000Q\_2\_1. The machine can be remotely accessed via a Python-based API. Performing computations requires
to specify a problem QUBO, and (essentially) anneal time and desired sample size \(n\). Once the QA has
evaluated the problem, a result set with \(n\) samples is returned. Each sample contains an assignment
for all qubits. 

 While it is possible to arbitrarily weigh the interaction between
qubits as specified by term \(c_{ij}\) from Eq.~(\ref{eq:qubo}), there are substantial restrictions on which
qubit \(i\) can \emph{physically} and \emph{directly} interact with which qubits \(j\) (see Ref.~\cite{cai2014practical} for details
on the available hardware graph structure). This limited connectivity poses a major practical challenge when
mapping logical to physical problems, since a pair of nodes that requires a logical connection
(a non-zero entry \(c_{ij}\) in Eq.~(\ref{eq:qubo})) must be represented by a chain of multiple
nodes on the hardware graph. This considerably limits the number of \emph{effectively usable} qubits as
compared to the number of \emph{physically available} qubits---Figure~\ref{fig:embedding_quality} exemplifies the problem visually. In general, longer chains lead to more undesirable physical perturbance,
and decrease result quality~\cite{Sax2020}. As a rule of thumb, the number of
usable logical qubits is only about 5-10\% the number of physical qubits.

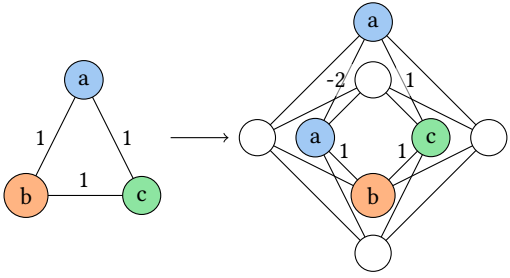
\begin{figure}
	\resizebox{0.8\linewidth}{!}{\begin{tikzpicture}[scale=1.5]
\definecolor{c_a}{rgb}{0.6313725490196078,
					   0.788235294117647,
					   0.9568627450980393};
					   
\definecolor{c_b}{rgb}{1.0,
					   0.7058823529411765,
					   0.5098039215686274};
					   
\definecolor{c_c}{rgb}{0.5529411764705883,
					   0.8980392156862745,
					   0.6313725490196078};
\colorlet{bg}{white}

\draw (0,0) node[circle, draw, fill=c_a]{a} 
	  -- node[anchor=east] {1}
	  ++(-0.5, -1) node[circle, draw, fill=c_b]{b}
	  -- node[anchor=south]{1}
	  ++(1, 0) node[circle, draw, fill=c_c]{c}
	  -- node[anchor=west]{1}
	  (0, 0);

\draw (2.5, 0.5)
	  -- node[fill=bg, opacity=.5, text opacity=1, xshift=3]{1}
	  ++(0.5, -1)
	  --
	  ++(-0.5, -1)
	  --
	  ++(-0.5, 1)
	  -- node[fill=bg, opacity=.5, text opacity=1, xshift=-3]{-2}
	  ++(0.5, 1);
	  
\draw (1.5, -0.5) node[circle, draw, fill=bg]{\:\:\:}
      --
      ++(1, 1) node[circle, draw, fill=c_a]{a}
      --
      ++(1, -1) node[circle, draw, fill=bg]{\:\:\:}
      --
      ++(-1, -1) node[circle, draw, fill=bg]{\:\:\:}
      --
      ++(-1, 1)
      --
      ++(1, 0.5) node[circle, draw, fill=bg]{\:\:\:}
      --
      ++(0.5, -0.5) node[circle, draw, fill=c_c]{c}
      -- node[anchor=south]{1}
      ++(-0.5, -0.5) node[circle, draw, fill=c_b]{b}
      -- node[anchor=south]{1}
      ++(-0.5, 0.5) node[circle, draw, fill=c_a]{a}
      --
      ++(0.5, 0.5)
      --
      ++(1, -0.5)
      --
      ++(-1, -0.5)
      --
      ++(-1, 0.5);
	  
\draw[->] (0.75, -0.5) -- (1.25, -0.5);
\end{tikzpicture}}
	\caption{Example for embedding a \emph{logical} graph that describes couplings between qubits
          (left) into a \emph{physical}
          qubit structure (right) with limited connectivity. Node ``a'' is mapped to a chain of two nodes
          representing ``a'', which illustrates that the amount of physical qubits required to represent
          a problem is larger than the amount needed for a structural description.}\label{fig:embedding}
\end{figure}

%According to the framework of adiabatic quantum computation, the D-Wave machine interpolates between an initial Hamiltonian \(H_I\) and the user %defined problem Hamiltonian \(H_P\) (QUBO). The user also specifies a duration \(t \in [0, t_f]\) (\(t_f\) - anneal time). A monotonically %decreasing time scale \(s(t) \in [0,1]\) with \(s(0) = 1\) and \(s(t_f) = 0\) gets introduced. While moving from \(s = 1\) to \(s = 0\) the %Hamiltonian \(H(s)\) moves from \(H_I\) to \(H_P\), meaning that for \(s = 1\) we get \(H(s) = H_I\) and \(H(s) = H_P\) for \(s = 0\). The %interpolation between \(H_I\) and \(H_P\) is controlled by \(\Gamma(s)\) and \(\Lambda(s)\). The adiabatic theorem tells us that if we start in %the ground state of \(H_I\) and chose a big enough \(t_f\), we will eventually and with high probability end up in the ground state of %\(H_P\)~\cite{born1928beweis}. The ground state is the state of lowest energy. For \(H_P\) this correlates to the minimum of the original %QUBO.~\cite{johnson2011quantum} As already mentioned, there is always a factor of uncertainty and probability in play. Therefor, it is common %practice to calculate multiple samples and pick the one with lowest energy.

\section{Quality Assessment of Quantum 3-SAT}\label{sec:ksat_on_dwave}
Let us now turn our attention to discussing how implementation details of 
quantum computational primitives can influence qualities of software. 
We focus on the problem of finding and comparing reductions of the problem to a machine
specific structure. Such low-level issues are usually not of much relevance in 
software engineering, and are justifiably perceived as implementation details---however,
we show that this level of abstraction is far from reached on quantum machines yet.

The \(k\)-SAT problem, the cornerstone of NP-completeness~\cite{karp1972reducibility}, serves as an example.
We first discuss different reductions of \(k\)-SAT \(\leq_{P}\) QUBO, and show how differences
arise from seemingly small details. We then offer guidance on comparing reductions. 

\subsection{Problem Definition}
\label{sec:3sat}
The problem of Boolean satisfiability is well known:
Let \(X := \{x_1, x_2, \dots, x_n\}\) be a set of Boolean variables, and let \emph{literals} be
defined as \(L := \{l | l \in \{x, \overline{x}\}, x \in X\}\). The set of all \emph{clauses} is given
by \(C := \{C_i | i \in [1; n], C_i \subset L, |C_i| = k\}\). For each \(x_i \in X\), there exists at least
one \(C_j\) such that \(x_i \in C_j\). A function \(f(\vec{x}) = \bigwedge_{C_i \in C} \bigvee_{l \in C_i} l\)
that satisfies these conditions is called a \(k\)-CNF function.   Given a \(k\)-CNF function \(f(\vec{x})\),
the \(k\)-SAT problem is to find an assignment \(\vec{x_t}\) such that \(f(\vec{x_t}) = \text{true}\). 
It is textbook knowledge~\cite{Sipser:2006} that every CNF formula can be cast in
3-CNF form.

 The \(k\)-SAT problem is the cornerstone of NP-completeness, but not all specific instances are difficult to solve.
The hardness of an instance depends on the ratio of clauses per variable \(\alpha = \frac{|C|}{|X|}\)~\cite{cheeseman1991really}. 
For instances with few clauses per variable (small \(\alpha\)), it is easy to find satisfying assignments. For instances with
many clauses per variable, it is easy to find contradictions. Instances with large or small values of \(\alpha\) tend to be
easy to solve. In an \(\alpha\) region surrounding \(\alpha_{c}\approx 4.25\), the probability that a random \(k\)-SAT formula can be satisfied
drops abruptly from 1 to 0~\citep{cheeseman1991really, mitchell1992hard}, and the hardest instances are contained in this
parameter regime. Improvements in SAT solving are therefore most desirable around this \emph{phase transition}.

\subsection{Choi's Standard Reduction}
Choi~\cite{Choi2010AdiabaticQA} gives a standard reduction from \(k\)-SAT to a QUBO. Let \(l_{ij}\) be the literal of variable \(x_{j}\) in clause \(i\).
Two literals \(l_{ij}\) and \(l_{i'j}\) are \emph{in conflict} if \(l_{ij} = \bar{l}_{i'j}\). Satisfying
a Boolean function in CNF implies at least one satisfied, but conflict-free literal per clause. 

The reduction assigns a negative weight \(-\omega\) to \(l_{ij}\): \(-\sum_{l_{ij}} \omega l_{ij}\). All
literals of a clause are fully connected with positive weighted edges: \(\sum_{l_{ij},l_{ij'} \in C_i} \delta l_{ij} l_{ij'}\).
All conflicting literals of the same variable are pairwise connected with \emph{positive} edge weights: 
\(\sum_{l_{ij} = \bar{l}_{i'j}} \delta l_{ij} l_{i'j}\). The last two sums are pure penalty terms and 
evaluate to 0 for correct assignments. This leads to a definition illustrated in Figure~\ref{fig:wmis_sat}:

\begin{definition}[\(k\)-SAT \(\leq_P\) QUBO (MIS)]\label{def:ksat_to_qubo}
	Let \(f(\vec{x})\) be a boolean \(k\)-CNF function. The literal of a variable \(x_j \in \vec{x}\) in a clause \(C_i \in C\) is given by \(l_{ij} \in \{0, 1\}\). Under the constraint \(\forall \delta, \omega : \delta > \omega > 0\), 
\begin{equation}
	\min[\vec{x}] \left( -\sum_{l_{ij}} \omega l_{ij} +
	\sum_{l_{ij},l_{ij'} \in C_i}
 \delta l_{ij} l_{ij'} +
	\sum_{l_{ij} = \overline{l_{i'j}}} \delta l_{ij} l_{i'j} \right)\label{eq:choi}
\end{equation}
	finds a satisfying assignment for \(f(\vec{x})\) if one exists.\citep{Choi2010AdiabaticQA}. 
\end{definition}

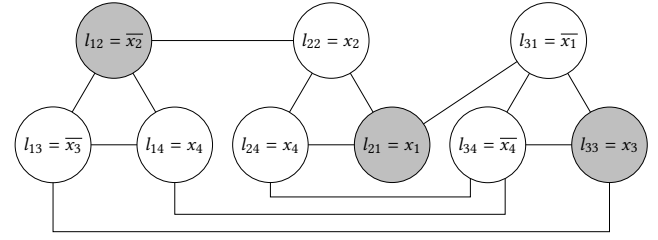
\begin{figure}
	\resizebox{\linewidth}{!}{\begin{tikzpicture}[scale=3]
\draw (0, 0) node[circle, fill=lightgray, draw] {
		\(l_{12} = \overline{x_2}\)
	  } -- 
	  ++(0.35, -0.6) node[circle, fill=white, draw] {
	  	\(l_{14} = x_4\)
	  } --
	  ++(-0.7, 0) node[circle, fill=white, draw] {
	  	\(l_{13} = \overline{x_3}\)
	  } -- (0, 0)
	  
	  (1.25, 0) node[circle, fill=white, draw] {
	  	\(l_{22} = x_2\)
	  } -- 
	  ++(0.35, -0.6) node[circle, fill=lightgray, draw] {
	  	\(l_{21} = x_1\)
	  } --
	  ++(-0.7, 0) node[circle, fill=white, draw] {
	  	\(l_{24} = x_4\)
	  } -- (1.25, 0)
	  
	  (2.5, 0) node[circle, fill=white, draw] {
	  	\(l_{31} = \overline{x_1}\)
	  } -- 
	  ++(0.35, -0.6) node[circle, fill=lightgray, draw] {
	  	\(l_{33} = x_3\)
	  } --
	  ++(-0.7, 0) node[circle, fill=white, draw] {
	  	\(l_{34} = \overline{x_4}\)
	  } -- (2.5, 0)
	  
	  (0, 0) -- (1.25, 0) ++(0.35, -0.6) -- (2.5, 0)
	  
	  (1.25, 0) ++(-0.35, -0.6) -- ++(0, -0.3) -- ++(1.15,0) -- ++(0, 0.3)
	  
	  (0,0) ++(0.35, -0.6) -- ++(0, -0.4) -- ++(1.9, 0) -- ++(0, 0.4)
	  
	  (0,0) ++(-0.35, -0.6) -- ++(0, -0.5) -- ++(3.2, 0) -- ++(0, 0.5);
\end{tikzpicture}}
	\caption{Graphical illustration of a QUBO formula that represents
    \(f(\vec{x}) = (\overline{x_2} \vee \overline{x_3} \vee x_4) \wedge (x_1 \vee x_2 \vee x_4 ) \wedge (\overline{x_1} \vee x_3 \vee \overline{x_4})\) using Choi's reduction. Grey nodes represent a satisfying assignment \([x_1 \mapsto 1, x_2 \mapsto 0, x_3 \mapsto 1]\).}
	\label{fig:wmis_sat}
\end{figure}

\subsection{Backbone Reduction}
To demonstrate the effect of different reductions on various aspects of QA performance,
consider a different reduction that we have devised for this paper, and that improves
(as we will analyse later) on the reduction given in Eq.~(\ref{eq:choi}):

\begin{definition}[\(k\)-SAT \(\leq_P\) QUBO (Backbone)]	\label{def:alt_sat_to_qubo}
    %Similar to \cref{def:ksat_to_qubo}, model every clause sub-QUBO (sink) by a fully connected graph %\(\sum_{l_{ij},l_{ij'}} \omega l_{ij} l_{ij'}\). For each literal \(l_{ij} = x_j\), add a term \(-\omega l_{ij} %x_j\) to the QUBO. For each literal \(l_{ij} \neq x_j\), we add a term \(\omega(-l_{ij} + l_{ij} x_j)\)
%    to the QUBO. Putting it all together this leads to the QUBO:
    Let \(f(\vec{x})\) be a Boolean function in \(k\)-CNF, and let \(l_{ij}\) be a literal of \(x_j \in \vec{x}\) in \(C_i \in C\), with \(l_{ij}, x_j \in \{0,1\}\). Then
	\begin{equation}
	q(\vec{x}) = \omega \left( \sum_{l_{ij}, l_{ij'}} l_{ij} l_{ij'} +
	\sum_{l_{ij} = x_j} -l_{ij} x_j +
	\sum_{l_{ij} \neq x_j} -l_{ij} + l_{ij} x_j \right)
	\end{equation}
	with \(\omega > 0\) describes a QUBO \(q(\vec{x})\) for which \(\min[\vec{x}]\) represents
	a satisfying assignment of \(f(\vec{x})\) if one exists. 
\end{definition}

\noindent Mathematical details of the derivation are given in Appendix~\ref{sec:appendix}.

\subsection{Quality Criteria for Reductions}
Quality criteria for software are plentiful, and many of them also apply to the relative merits
of reductions. Since the development of quantum computers is mainly driven by the desire for
more computational power, we focus on two indicators: Performance and scalability.
There is (despite recent \href{https://quantum.ieee.org/images/files/pdf/ieee-support-for-standards.pdf}{standardisation efforts}) no universally 
applicable (and accepted) definition of how to measure performance of quantum computers;
this is particularly hard for QA, were the run-time is not determined by the input, but
chosen as a parameter---the annealing time. Consequently, we consider solution
\emph{quality}---how likely is it to obtain a correct answer that does not violate constraints,
and how accurate is the answer (\ie, how close is it to the optimal achievable value)---as proxy
for performance. Scalability considers the question of how large problems can be solved
on a hardware of given size (\ie, number of physical qubits). 

The achievable accuracy of a reduction depends on its structure (how well do logical connections
between qubits match the available physical structure?) and on hardware parameters. While the
adiabatic theorem ensures that longer annealing times (runtimes) results in better accuracy,
flaws and approximate implementations of the scheme in real hardware lead to less direct relations.
Like with traditional approximation algorithms, increasing the amount of computes samples
also leads to more accurate solutions. 

% Sometimes, it is also necessary to take problem specific parameters into account. In our case, 
% hardness of k-SAT instances is of special interest, as it is not equally distributed across
% all instances.

\subsection{Generating Instance Datasets}\label{sec:instance_datasets}
%\COMMENT{structure: generating/picking a dataset is crucial for empirical comparisons - think about
%dataset composition with comparison parameters in mind - easy to build real world constraints/scenarios
%into datasets (e.g. picking benchmark datasets); in comparison to analytic approach - multiple datasets,
%one framework - discussing generation of dataset used in this paper (instances, distribution of instances)}

Owing to the lack of a published, physically accurate model of the quantum annealer that
includes imperfections and noise,\footnote{It is unlikely that such a physically accurate
model will be available in the near- or mid-term future.} determining scalability and
accuracy is currently only possible with experimental means~\cite{mcgeoch2019principles}.

%After defining such a subset, an experimental setup needs to be designed. These experiments
%have to suite a comparison of reductions based on these characteristics. Every analysis
%starts with a dataset of problem Instances, which then enter the whole process of reducing,
%embedding and solving, as discussed above. Therefore, this dataset can be considered as the
%input of our empirical analysis. It is also the most versatile part of the whole process and
%should be well suited to the actual characteristics of interest. It is very easy to swap
%datasets, this makes the empirical analysis of reductions very customizable and thus quite
%powerful. 

When SAT is used to model constrained optimisation problems in practical applications, the
resulting SAT instances often exhibit specific structural properties, which can guide
the generation of useful test instances for determining quality properties of reductions. This
is, of course, not unlike the well-studied problem of generating tailored input data for
general software testing problems~\cite{Saswat:2013}.

We are interested in a general comparison of reductions, and therefore base our input data generation
on general properties of 3-SAT. We have discussed that the problem exhibits different regimes
in Section~\ref{sec:3sat}, and systematically generate random 3-SAT instances that cover these
by sweeping across different values of \(\alpha\). 
For the number of needed qubits, \(k|C|\) is the dominant term for both reduction approaches. Keeping \(|C|\) fixed and varying \(|V|\) produces stable QUBO sizes across the \(\alpha\)-spectrum, which guarantees a consistent hardware graph utilization.  

% Du to the finite number of combinatorial possibilities to construct a k-SAT formula
%the \(\alpha\)-Spectrum is strictly bounded by \(\frac{|C|}{n_{\max}} \leq \alpha \leq \frac{|C|}{n_{\min}}\).
%Computing maximal possible number of variables \(n_{\max}\) for a fixed number of clauses \(|C|\) is straight
%forward: \(n_{\max} = k |C|\). The minimal number of variables is harder to compute. With \(n\) variables, at
%most \(|C|_{\max} = \left(\prod_{i=0}^{k-1} 2n - 2i\right) \frac{1}{k!}\) clauses can be constructed. Note
%that \(|C|_{\max}\) can be bounded by \((2 n - 2k)^k \frac{1}{k!} \leq |C|_{\max} \leq (2 n)^k \frac{1}{k!}\),
%from which follows that \(\floor*{\frac{1}{2} \sqrt[k]{|C| k!}} \leq n_{\min} \leq \ceil*{\frac{1}{2} \left( \sqrt[k]{|C| k!} %\right) + k} \).

% We can cover the alpha spectrum by defining a fixed \(|C|\),
%randomly picking \(n \in [n_{\min}, n_{\max}]\) and generating random k-SAT instances with
%\(|C|\) clauses and \(n\) variables.

%In \cref{sec:3sat} we have already mentioned that the most interesting %k-SAT instances are located around the critical value \(\alpha_c\). In %order to especially target this area we picked \(n\) out of a logistic %distribution with \(E(n) = \frac{|C|}{\alpha_c}\).

\subsection{Experimental Results}
%\COMMENT{structure: depends on dataset - accuracy for 3-SAT - only false %negatives (RP-like) - fraction of satisfiable solutions vs expected %probability of sat solutions (\cref{fig:prob_assignments}) - logistic %regression - binned alpha values - quantitative metrics - quality of %regression model (residuals \cref{fig:residuals}) - problem of finding %expected prob - directly comparing fractions of sat solutions %\cref{fig:improvement} - embeddability}

\begin{figure}
	\includegraphics[width=\linewidth]{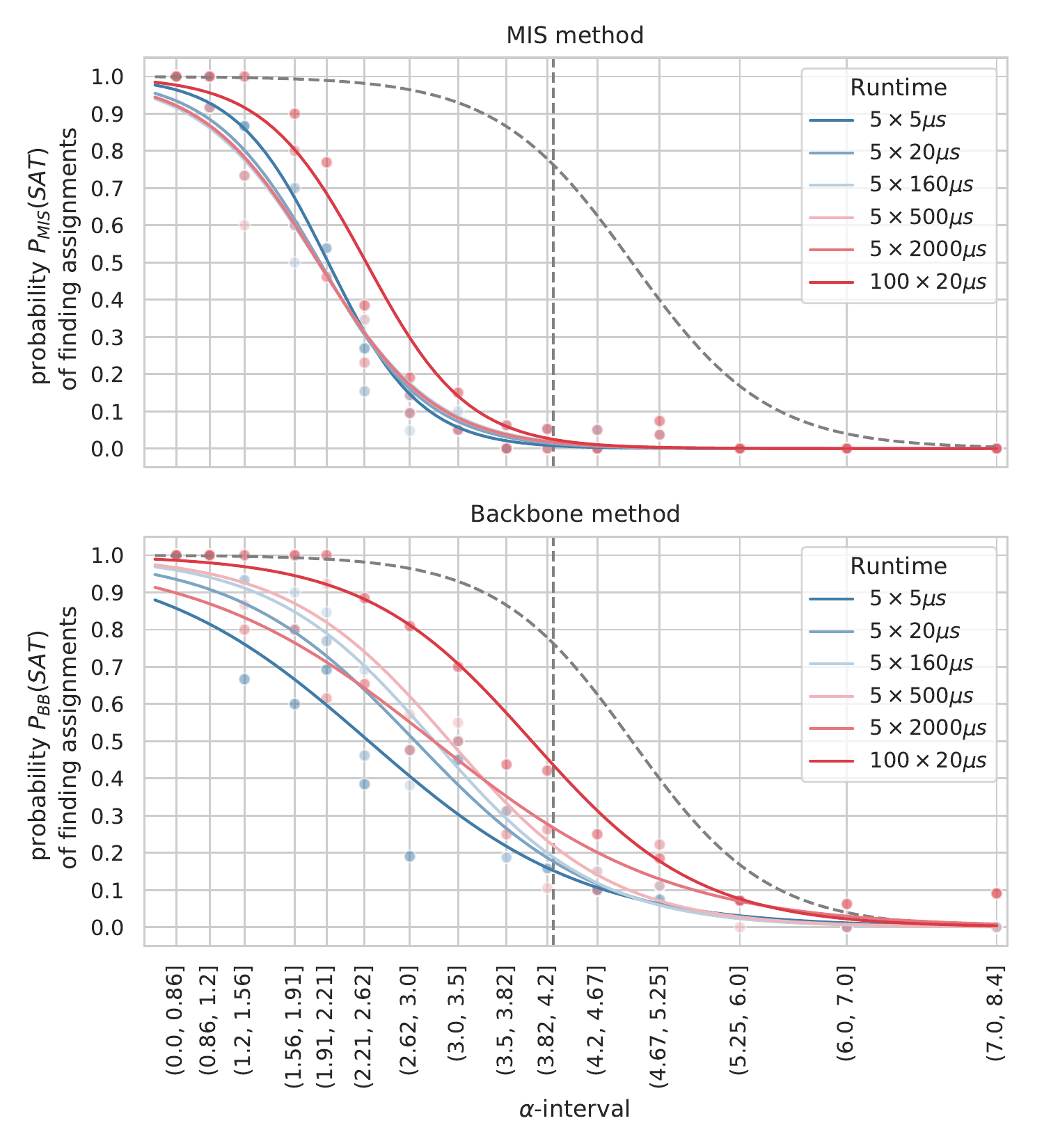}
	\caption{Influence of the embedding method on the probability of finding correct satisfying assignments for randomly
	generated 3-SAT instances with varying ratios \(\alpha\) of clauses to variables. The horizontal dashed lines marks the critical value \(\alpha_{c}\) (accompanied by a peak increase in required computing time when using traditional numeric
	solvers). The dashed curve represents the probability distribution of finding a satisfying assignment with optimal solvers. To ease comparing quantum and classical result, a logistic regression curve is given for each
    parameter variation.\newline
    Both (mathematically equivalent) methods arrive at correct conclusions less often than classical solvers, which is caused by imperfections and limitations of the available hardware.}\label{fig:prob_assignments}
\end{figure}

We generate a data-set containing 250 random 3-SAT instances with 42 clauses each. In
total, six runs with varying annealing times (5 to 2000\(\mu\)s) and samples sizes 
(5 and 100) were performed on the quantum annealer described in Section~\ref{sec:dwave_platform}.
Figure~\ref{fig:prob_assignments} shows results for the two different reduction methods. 

\subsubsection{Accuracy}
For Choi's MIS-based reduction, the annealing time does not substantially  effect the
accuracy. Using probability amplification by performing a larger number of runs \(100 \times 20 \mu s\),
does improve the accuracy slightly. Results obtained with the backbone method, in contrast,
improve with increasing annealing time, and increasing the number of runs is also accompanied by 
a larger improvement as compared to the MIS method. It is also important to note that using an excessively
long annealing time of 2000\(\mu\)s results in a \emph{decrease} of result quality\footnote{This observation does
contradict the adiabatic theorem; the effect is likely caused by a large amount 
of noise leading incorrect initial configurations or random energy level jumps during the annealing
process. Both effects occur with growing probability for increasing annealing times.}.

Recall from the above discussion that solving SAT instances in the critical parameter region around
\(\alpha_{c}\) is most involved for classical solvers, and improvements by quantum computers are most
desirable in this region. Unfortunately, the MIS method delivers satisfying solutions in this range with
almost zero probability.

\begin{figure}
	\includegraphics[width=\linewidth]{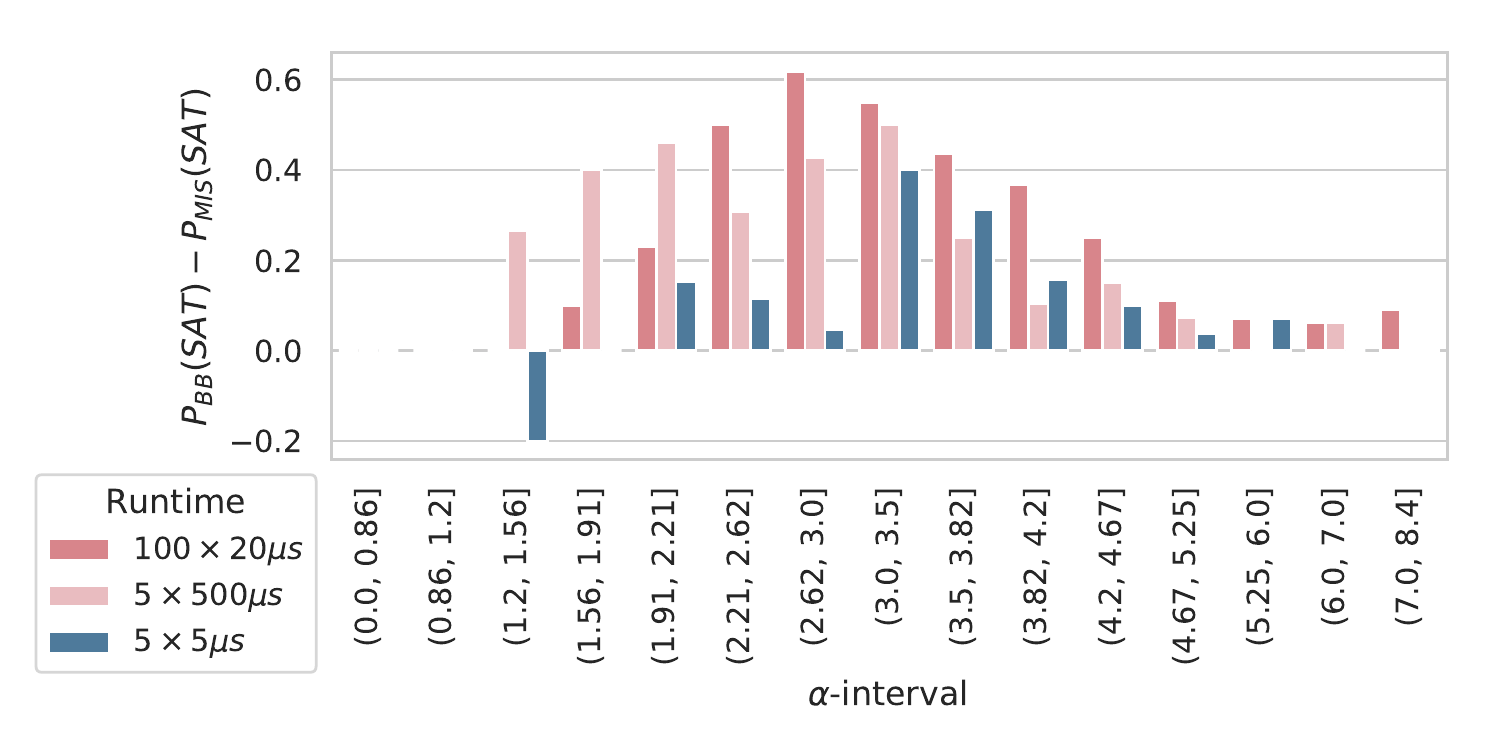}
	\caption{Accuracy difference (in percentage points) between MIS and backbone method.}
	\label{fig:improvement}
\end{figure}

Figure~\ref{fig:prob_assignments} directly compares accuracy results. The difference in accuracy reaches
up to 60\%, and the backbone method is consistently more accurate for all scenarios. The decreasing difference
in accuracy at  \(\alpha > \alpha_{c}\) is a consequence of the low number of satisfiable instances in
this region. Around the critical region, we observe marked differences of around 35\%.

\begin{figure}
	\includegraphics[width=\linewidth]{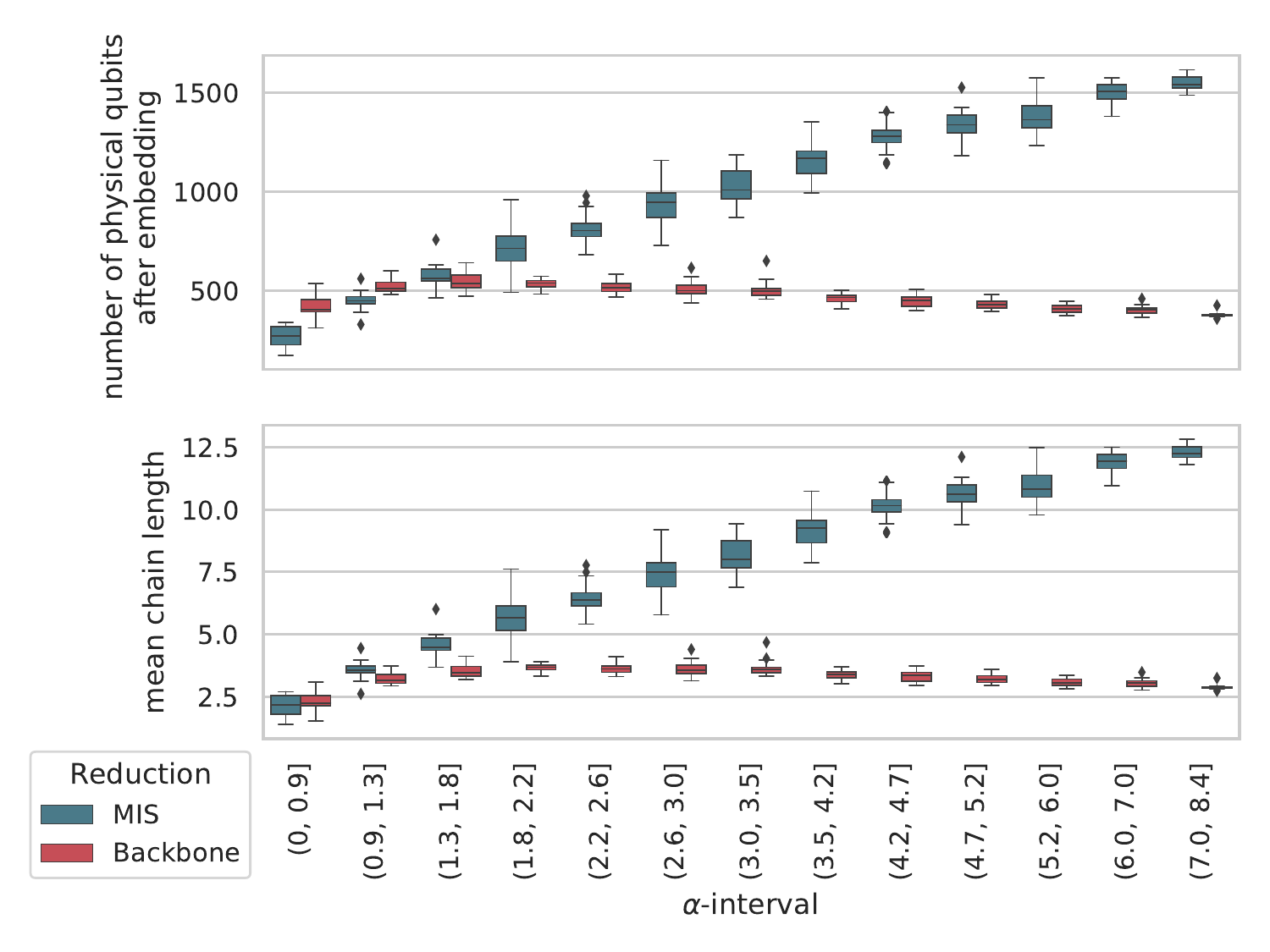}
	\caption{Number of required physical qubits to after embedding a QUBO for a given ratio of
	 variables and clauses (top) and median chain lengths necessary to connect qubits without
	 direct interconnections (bottom).}	\label{fig:embedding_quality}
\end{figure}

\subsubsection{Scalability}
Figure~\ref{fig:embedding_quality} compares scalability of the two reductions by analysing the amount
of required physical qubits, and the mean length of chains necessary to connect qubits without
direct physical connections (we use the \emph{minorminer} tool provided as part of the D-Wave
API to embed QUBOs into the hardware graph).

For the MIS-based method, the amount of physical qubits and mean chain length grow
essentially linear with an increasing \(\alpha\), which follows from the
pairwise links between conflicting literals. 

The backbone method improves upon both aspects because the QUBO is less densely populated,
which makes it easier to find embeddings. Especially around the critical value \(\alpha_{c}\),
the amount of required physical qubits is only half of what is required for the MIS method,
which in turn implies that substantially larger problem sets can be solved on a hardware of
given size.

% TODo: validation of results
\section{Conclusion}\label{sec:conclusion}
Development and evaluation of quantum software components must address
well-established engineering concerns of traditional SWE.
Based on the scenario of replacing SAT solving, a key element of many applications,
with a quantum component, we have shown that careful attention is required
in defining and evaluating relevant qualities. We have argued that scalability
and accuracy are of particular relevance for early existing 
quantum annealers. While replacing classical with quantum components is not
particularly involved from a programming perspective, our experiments
indicate that engineers must be aware of crucial details that might be perceived as irrelevant in traditional SWE to make informed decisions on potentials and pitfalls of quantum computing.

%We introduce improvements on the standard approach of Eq.~\ref{eq:choi} by lifting 
%unnecessarily strong restrictions, which allows for creating less dense QUBOs, and leads
%to substantial improvements in result quality.

%We introduce improvements on the standard approach of Eq.~\ref{eq:choi} by lifting 
%unnecessarily strong restrictions, which allows for creating less dense QUBOs, and leads
%to substantial improvements in result quality. 
%We introduce improvements on the standard approach of Eq.~\ref{eq:choi} by lifting 
%unnecessarily strong restrictions.
% The choice of embedding method substantially influences both, the probability of% finding correct solutions, and the improvements gained by using
%    longer annealing times, especially in the vicinity of the
%    critical ratio \(\alpha_{c}\).

% ---- Bibliography ----
\bibliographystyle{ACM-Reference-Format}
\bibliography{ms}

\begin{appendix}
\section{Alternative Reductions}\label{sec:appendix}
%\subsection{Loosened Clause Penalties}
\textbf{Loosened Clause Penalties}
Choi's reduction (\cref{def:ksat_to_qubo}), is, in essence, a reduction from \(k\)-SAT to the problem of
finding the maximal independent set (MIS) of a given graph. Assume a \(k\)-SAT instance is reduced to QUBO as described in \cref{def:ksat_to_qubo}, and let \(G_f\) be the graph representation.
Consider a MIS of \(G_f\), which is given by the largest set of vertices such that there are no
connected vertices. In \cref{def:ksat_to_qubo} this
property is enforced by the constraint \(\delta > \omega\). Solving a QUBO defined by \cref{def:ksat_to_qubo} also solves the MIS problem for \(G_f\). The problem of finding the MIS \(G_f\) corresponds to the problem of finding the maximal number of satisfiable clauses (MAX-k-SAT) in \(f\). The relation between MAX-k-SAT and k-SAT is trivial.
%: Considering the direct \(\text{k-SAT} \leq_P \text{QUBO}\) reduction, this constraint is stricter %than necessary. Hence only one literal can be picked per clause sub-QUBO. 

\begin{theorem}
	\label{thm:loosen}
	Setting \(\delta = \omega\) in \cref{def:ksat_to_qubo} does not change the correctness of the assignment derived from the QUBO solution.
\end{theorem}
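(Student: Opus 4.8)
The plan is to show that after substituting $\delta=\omega$ in Eq.~(\ref{eq:choi}) the resulting QUBO still attains its minimum exactly at configurations that decode to satisfying assignments of $f(\vec x)$ whenever such assignments exist; since that is precisely the correctness guarantee of the original $\delta>\omega$ reduction (\cref{def:ksat_to_qubo}), nothing is lost. I would base the whole argument on one energy lower bound together with a characterisation of the configurations that attain it.

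First I would set up notation. For a $\{0,1\}$-assignment of the literal variables $l_{ij}$, let $s$ be the total number of selected literals (those with $l_{ij}=1$), let $s_i$ be the number selected inside clause $C_i$, and rewrite the objective of Eq.~(\ref{eq:choi}) as $E=-\omega s+\delta\bigl(P_{\mathrm{clq}}+P_{\mathrm{cfl}}\bigr)$, where $P_{\mathrm{clq}}=\sum_i\binom{s_i}{2}$ counts selected intra-clause edges and $P_{\mathrm{cfl}}\ge 0$ counts selected conflicting pairs. The core estimate is $P_{\mathrm{clq}}\ge s-|C|$, which follows from $\binom{m}{2}\ge\max(m-1,0)$ for every integer $m\ge 0$ together with $\sum_i\max(s_i-1,0)=s-|\{i:s_i\ge 1\}|\ge s-|C|$. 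Substituting and using $\delta=\omega$ yields $E\ge-\omega s+\omega(s-|C|)=-\omega|C|$, a bound independent of the assignment. Conversely, a satisfying assignment of $f$ induces the selection of one satisfied literal per clause, and this selection is conflict-free (two selected literals on the same variable are both satisfied, hence have the same polarity), giving $s=|C|$ and $P_{\mathrm{clq}}=P_{\mathrm{cfl}}=0$, i.e.\ energy exactly $-\omega|C|$. Hence, when $f$ is satisfiable, $-\omega|C|$ is the QUBO minimum.

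Next I would characterise the minimisers. Equality $E=-\omega|C|$ forces all three inequalities above to be tight simultaneously: $P_{\mathrm{cfl}}=0$ (no selected pair conflicts), $\binom{s_i}{2}=\max(s_i-1,0)$ for each $i$ (so $s_i\in\{0,1,2\}$), and $|\{i:s_i\ge 1\}|=|C|$ (so $s_i\ge 1$ for every clause). Thus a minimiser is a conflict-free literal selection that hits each clause once or twice. The standard decoding then applies unchanged: put $x_j=1$ if some selected literal is $x_j$ and $x_j=0$ if some selected literal is $\overline{x_j}$ (these are mutually exclusive by conflict-freeness), and fix the remaining variables arbitrarily. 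Every clause contains a selected literal, which is satisfied under this assignment, so $f$ is satisfied. Therefore every optimal QUBO solution yields a satisfying assignment, exactly as in Choi's reduction, which is the claim.

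The step I expect to be the crux is the minimiser characterisation, because this is where $\delta=\omega$ genuinely differs from $\delta>\omega$: under the strict inequality the minimiser is pinned to an independent set of size $|C|$ (one literal per clause), whereas under equality the minimiser set is strictly larger and may over-cover some clauses with two literals. The real work is checking that this extra freedom is harmless, i.e.\ that over-covered but conflict-free selections still decode to satisfying assignments; the supporting algebra ($\binom{m}{2}\ge\max(m-1,0)$ and the counting identity for $\sum_i\max(s_i-1,0)$) is routine.
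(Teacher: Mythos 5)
Your proof is correct, and it reaches the same essential insight as the paper's proof --- that under \(\delta=\omega\) a clause sub-QUBO with \(n\) satisfied literals has energy \(-n\omega+\binom{n}{2}\omega\), minimised at \(-\omega\) for \(n\in\{1,2\}\) --- but you package it differently and, frankly, more rigorously. The paper argues clause-locally: it computes \(E(n)\) for a single clause, observes \(E(1)=E(2)=-\omega\), and then aggregates informally by considering a pair of conflict-free clauses versus a conflicting pair (where the conflict edge adds \(+\omega\)), concluding that the global minimum is \(-|C|\omega\) exactly for satisfiable instances. Your version replaces that pairwise aggregation with a single global inequality chain, \(P_{\mathrm{clq}}\ge s-|C|\) via \(\binom{m}{2}\ge\max(m-1,0)\), giving the uniform lower bound \(-\omega|C|\), and then extracts the full set of minimisers from the tightness conditions. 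This buys two things the paper leaves implicit: (i) a proof that the per-clause minima are simultaneously achievable exactly when \(f\) is satisfiable, rather than an argument about two clauses at a time, and (ii) an explicit verification that every minimiser --- including the ``over-covered'' ones with \(s_i=2\) that only appear once \(\delta=\omega\) --- still decodes to a satisfying assignment via conflict-freeness. Point (ii) is precisely the step the paper's proof glosses over with ``the minimization of \(q_i(x)\) leads to a satisfied clause,'' so your identification of the minimiser characterisation as the crux is well placed.
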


\begin{proof}\label{prf:loosen}
	Let \(q(x)\) be a sub-QUBO representing one clause like described in \cref{def:ksat_to_qubo}. Under \(\delta = \omega\) the following holds: \(\min(E(n)) = -\omega\) if \(E(n)\) is the energy of a clause sub-QUBO with \(n\) satisfied literals.
	It is straight forward to see that \(E(n) = -n\omega + \binom{n}{2}\omega\). Therefor, \(E(0) = 0\) and \(E(1) = E(2) = -\omega\). The inequality \(E(n) < E(n+1)\) evaluates to \(-n < -1\) which is true for all \(n > 1\). This leads to the conclusion that \(\min(E(n)) = -\omega\).
	
	Consider a clause \(C_i\) and its corresponding sub-QUBO \(q_i(x)\). Now, \(\min(q_i(x)) = -\omega\) for one or two satisfied literals in \(q_i(x)\). Therefore, the minimization of \(q_i(x)\) leads to a satisfied clause \(C_i\). For two conflict-free clauses \(C_i\) and \(C_j\) the combined minimum energy is given by \(\min(q_i(x) + q_j(x)) = -2\omega\). Now we introduce a conflict between \(C_i\) and \(C_j\). That activates an additional penalty term \(p_{ij} = \omega\) which leads to \(\min(q_i(x) + q_j(x) + p_{ij}) = -\omega > -2\omega\). This shows that conflicts between clauses always lead to a higher energy level and thus should be avoided when minimizing the complete QUBO \(q(x)\). For all satisfiable k-SAT instances \(f(x)\) with \(n\) clauses the minimal energy of their corresponding QUBOs \(q(x)\) will be \(min(q(x)) = -n\omega\). Every function \(f(x)\) with  minimal QUBO value \(\min(q(x)) > -n\omega\) cannot	be satisfied.
\end{proof}

%The coverage \(\cvg(g)\) of a sub QUBO \(g(x)\) is given by the percentage of satisfying
%assignments that can be linked to a ground state of \(g(x)\). By comparing the coverage
%of a strict MIS compliant clause sub-QUBO \(q_m(x)\) and a loosened clause sub QUBO \(q_l(x)\),
%we see that 
%\(\cvg(g_m) = \frac{k}{2^k - 1} < \frac{k + \binom{k}{2}}{2^k - 1} = \cvg(g_l)\)
%for \(k > 1\). The maximal advantage (\(\max(\cvg(g_l) - \cvg(g_w)) = 0.43\)) is achieved
%for 3-SAT functions.

%\subsection{Backbone}
\textbf{Backbone}
Choi's reduction represents variables solely by their literals. To avoid conflicts, we need to ensure that \(l_{ij} \neq l_{i'j}\) for all pairs \((l_{ij}, l_{i'j}) \Leftrightarrow (x_j, \bar{x}_j)\). An edge in the QUBO connects the
literals as penalty term, which leads to highly connected graphs for instances with large values of
\(\alpha\). The degree of connectivity can be improved by introducing a \emph{backbone} for
variables, which allows us to transitively express equivalence between literals 
by linking them to their corresponding variable. The reduction is given in
Definition~\ref{def:alt_sat_to_qubo} on page~\pageref{def:alt_sat_to_qubo}.

\begin{proof} [Correctness of \cref{def:alt_sat_to_qubo}]
    Let \(E(n)\) be the energy of a clause sub-QUBO with \(n\) satisfied literals. The
	difference between sub-QUBOs in \cref{def:ksat_to_qubo} and \cref{def:alt_sat_to_qubo} is
	that in the latter node weights of literals \(l_{ij} \neq x_j\) are moved to the edges \((l_{ij},	x_j)\). For every literal, there exists exactly one edge to its corresponding variable.
	Therefore, edge weights can be viewed as node weights, and it follows that \(E(n) = -n\omega +
	\binom{n}{2}\omega\). Consequently, \(\min(E(n)) = -\omega\) also holds for
	\cref{def:alt_sat_to_qubo}. If two literals \(l_{ij}\) and \(l_{i'j}\) conflict, one of
	 \(\omega l_{ij} x_j \) or \(\omega l_{i'j} x_j\) evaluates to
	\(\omega\), while the other evaluates to 0. The rest of the argument follows Theorem~\ref{thm:loosen}.
\end{proof}

\end{appendix}

\end{document}